\def\Pr{\mbox{{\bf P}}}
\def\Prob{\Pr}
\def\polylog{\operatorname{polylog}}
\newcommand{\ignore}[1]{}
\newtheorem{theorem}{Theorem}
\newtheorem{lemma}[theorem]{Lemma}
\newtheorem{proposition}[theorem]{Proposition}
\newtheorem{remark}[theorem]{Remark}
\newcommand{\mytag}[2]{%
	\text{#1}%
	\@bsphack
	\begingroup
	\@onelevel@sanitize\@currentlabelname
	\edef\@currentlabelname{%
		\expandafter\strip@period\@currentlabelname\relax.\relax\@@@%
	}%
	\protected@write\@auxout{}{%
		\string\newlabel{#2}{%
			{#1}%
			{\thepage}%
			{\@currentlabelname}%
			{\@currentHref}{}%
		}%
	}%
	\endgroup
	\@esphack
}
\begin{document}

\title{Best-of-Three Voting on Dense Graphs \thanks{Nicolas Rivera is supported by Thomas Sauerwald's ERC Starting Grant 679660 (DYNAMIC MARCH)
}
}

\author{
Nan Kang\thanks{Department of Informatics, King's College London, UK.
{\tt nan.kang@kcl.ac.uk}}
\and Nicol\'as Rivera\thanks{Computer Laboratory, University of Cambridge, UK.
{\tt nicolas.rivera@cl.cam.ac.uk}}
}
\date{\today}

\maketitle

\begin{abstract}
Given a graph $G$ of $n$ vertices, where each vertex is initially attached an opinion of either red or blue. We investigate a random process known as the Best-of-three voting. In this process, at each time step, every vertex chooses three neighbours at random and adopts the majority colour. We study this process for a class of graphs with minimum degree $d = n^{\alpha}$\,, where $\alpha = \Omega\left( (\log \log n)^{-1} \right)$. We prove that if initially each vertex is red with probability greater than $1/2+\delta$, and blue otherwise, where $\delta \geq (\log d)^{-C}$ for some $C>0$, then with high probability this dynamic reaches a final state where all vertices are red  within $O\left( \log \log n\right) + O\left( \log \left( \delta^{-1} \right) \right)$ steps\,.

{\bf Keywords}: random processes on graphs; voting models; consensus problem.
\end{abstract}

\section{Introduction}

Algorithms and protocols that solve consensus problems play an important role in distributed computing, analysis of social networks, etc. Usually, in these processes, vertices of a graph revise their opinions in a systematic and distributed way based on opinions of other neighbours, typically by sampling some of their neighbours. The aim of these protocols is to eventually reach a state where all vertices share the same opinion, and ideally this final state reflects the characteristics of the initial mix of opinions, e.g. the initial majority.

Among the protocols that solve consensus problems, one well-known protocol is the Best-of-$k$ model, in particular the cases with $k=1,2$ and $3$. In this protocol, we consider a graph $G = (V,E)$, in which each vertex has an initial opinion (colour), and at each time step, every vertex adopts the opinion of the majority of a sample of $k$ neighbours (uniformly with replacement). If there is no clear majority several rules can be applied, but usually i) the vertex keeps its opinion or ii) the vertex picks a random one from the popular opinions among the $k$ neighbours.

The Best-of-$1$ model is the well-known voter model. This protocol solves the consensus problem in connected non-bipartite graphs. It is widely known that the probability of the system reaching consensus on a particular colour is proportional to the sum of the degrees of the vertices whose initial opinion is such colour. Specifically, a particular colour `wins' with probability equals to the initial proportion of that colour in the configuration. Although the voter model can be used to solve consensus problems, it is not the desired protocol for applications where consensus to majority is required.  

Best-of-$k$ with $k=2,3$ partially overcomes the aforementioned problem, and it converges to majority under appropriate circumstances. Moreover, it converges considerably faster compared to the voter model. This model has been extensively studied when the underlying topology is a complete graph. In \cite{Becchetti2014Simple}, the authors investigated the Best-of-$3$ dynamics breaking ties at random. They considered $q$ initial different opinions and proved that if the initial imbalance in the number of opinions between the first and second majorities is $\Omega(\min\{\sqrt{2q}, (n/\log n)^{1/6}\}\sqrt{n/\log n})$, then consensus is reached with high probability in $
O(\min\{q,(n/ \log n)^{1/3}\}\log n)$ steps on the initial majority. Similar results can be proved for Best-of-$2$ dynamics \cite{Ghaffari2018nearly}.

When it comes to non-complete graphs, the Best-of-$k$ process seems very difficult to study. In this regard, \cite{Cooper2014Power} studied the Best-of-two process on a $d$-regular graph where each vertex has one out of two opinions, say, Red or Blue. The authors showed that if the imbalance between the number of red and blue opinions is greater than $Kn\sqrt{1/d+d/n}$ initially, where $K$ is a large constant, then w.h.p the process reaches consensus towards majority in $O(\log n)$ time-steps. In \cite{Cooper2015Fast}, the result was extended and refined to general graphs with large expansion. Denote by $R_0$ and $B_0$ the initial sets of vertices with red and blue opinions respectively. Then assume that $d(R_0)- d(B_0) \geq  4\lambda_2 d(V)$, where $d(X)$ denotes the sum of the degrees of the vertices in $X$, and $\lambda_2$ is the second largest absolute eigenvalue of the transition matrix associated with the graph, then w.h.p consensus is reached in $O(\log n)$ rounds and opinion red wins. In regular graphs, the aforementioned condition implies an $\Omega(n)$ gap between the sizes of the set $R_0$ and $B_0$. In \cite{Cooper2017Fast}, the result is extended to a larger number of initial opinions but with stronger assumptions.

Best-of-$k$ with odd $k \geq 5$ was studied in \cite{Abdullah2015Global} for the two-party model on random graphs with a given degree sequence. Under their setting, initially, each vertex is blue independently with probability $1/2 - \delta >0$, and red otherwise. Then, it is demonstrated that if $\delta$ is large enough, and $k \geq \hat{d}_{\min}$, then consensus is reached in $O(\log_k \log_k n)$ time steps and opinion red wins. Here $\hat{d}_{\min}$ is the effective minimum degree, which is the smallest integer that appears $\Theta(n)$ times in the degree sequence.

\subsection{Main Results}
In the current work, we study the particular case $k=3$ in the two-party setting, where initially each vertex is blue independently with probability $1/2-\delta >0$, otherwise red. By applying two models to analyse the process of a vertex updating its opinions, we find conditions for convergence to majority in $O(\log \log n)$ time-steps with high probability. Our main result is the following.

\begin{theorem}
Given a graph $G$ on $n$ vertices with minimum degree $d = n^{\alpha}$\,, where $\alpha = \Omega((\log \log n)^{-1})$\,, we suppose that initially each vertex is blue independently with probability $1/2 - \delta$\,, otherwise red, with $\delta \geq (\log d)^{-C}$ for some $C>0$. Then, w.h.p, the Best-of-Three protocol reaches consensus in $O(\log \log n) + O\left(\log(\delta^{-1})\right)$ time-steps and the final opinion is red.
\end{theorem}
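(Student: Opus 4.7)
My plan is to track the global density of red vertices $p_t$ at each time $t$ and show that it approximately follows the deterministic one-dimensional map
$f(p) := p^3 + 3p^2(1-p) = 3p^2 - 2p^3,$
which is exactly the probability that a single vertex turns red when each of its three sampled neighbours is independently red with probability $p$. The map $f$ has fixed points $\{0,1/2,1\}$ on $[0,1]$, with $f(1/2+x) = 1/2 + \tfrac{3}{2}x - 2x^3$ near $1/2$ and $f(1-y) = 1 - 3y^2 + 2y^3$ near $1$. Hence once $p_t$ sits slightly above $1/2$ the bias $p_t - 1/2$ is amplified by a factor $\approx 3/2$ per step, and once $p_t$ is well above $1/2$ the deviation $1 - p_t$ contracts quadratically to $0$.

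The analysis splits into two phases. \emph{Phase 1 (bias amplification):} starting from $p_0 = 1/2 + \delta$, iterate the $3/2$ amplification until $p_t$ reaches some fixed constant, say $p_t \geq 2/3$; this takes $O(\log(1/\delta))$ steps. \emph{Phase 2 (quadratic convergence and clean-up):} from $p_t \geq 2/3$, use $1 - p_{t+1} \leq O((1 - p_t)^2)$ to drive $1 - p_t$ down as far as the stochastic error permits, which takes $O(\log\log n)$ steps; a final one or two rounds, analysed by a direct union bound over the remaining blue vertices, show that almost surely no vertex sees a blue majority in its sample and consensus is attained.

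The key ingredient for both phases is a one-step concentration lemma: if the colouring at time $t$ is \emph{locally smooth}, meaning the red-fraction $\hat p_v(t)$ inside every vertex neighbourhood is close to the global $p_t$, then $p_{t+1} = f(p_t) + \text{small error}$ and local smoothness is preserved, all w.h.p. I would prove this in three stages. First, local smoothness at $t = 0$ is immediate from a Chernoff bound applied to the independent initial colouring: $|\hat p_v(0) - p_0| = O(\sqrt{\log n / d})$ uniformly in $v$. Second, conditional on the configuration at time $t$, the update indicators $\mathbf{1}[v \text{ is red at } t+1]$ are independent Bernoullis of parameter $f(\hat p_v(t))$, so Chernoff bounds concentrate $p_{t+1}$ and each $\hat p_v(t+1)$ around their conditional means. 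Third, Taylor-expanding $f$ around $p_t$ and using the uniform bound on $|\hat p_v(t) - p_t|$ converts these conditional means into $f(p_t) + O(\log n / d)$ (up to a first-order term which vanishes for regular graphs and must be bounded separately in the irregular case via the weights $c_w = \sum_{u \in \Gamma(w)} 1/d_u$).

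The main obstacle is the accumulation of error: the additive slack $\sqrt{\log n/d}$ per step must remain negligible compared to the ``signal'' $|p_t - 1/2|$ during Phase 1 and $(1-p_t)^2$ during Phase 2. In Phase 1 this forces $\delta \gg \sqrt{\log n/d}$, which is precisely what the hypotheses $\delta \geq (\log d)^{-C}$ and $d = n^\alpha$ with $\alpha = \Omega(1/\log\log n)$ provide, since then $\sqrt{\log n/d} = n^{-\Omega(1/\log\log n)} = o((\log d)^{-C})$. In Phase 2 the quadratic signal $(1-p_t)^2$ shrinks doubly exponentially and eventually meets the stochastic error around $1 - p_t \sim \log n / d$, where local smoothness breaks down; at this point I would stop using the density recursion and instead bound the expected number of blue vertices by $3 \sum_v \hat q_v(t)^2$ with $\hat q_v = 1 - \hat p_v$, then iterate this quadratic inequality together with a per-vertex union bound to show that the blue count vanishes within $O(1)$ further rounds.
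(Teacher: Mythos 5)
Your route is genuinely different from the paper's. The paper works backwards in time: it fixes a vertex $v$, builds the random voting-DAG of its queries (the dual COBRA-walk trajectory), and bounds $\Prob(\xi_T(v)=\textbf{B})=o(1/n)$ directly, handling dependencies via a ``Sprinkling'' coupling on the lower levels and a collision-counting argument on the upper levels (where a near-ternary tree needs $\approx 2^h$ blue leaves for a blue root). You instead run the dynamics forward and track neighbourhood densities with per-step Chernoff bounds. Your Phase~1 and the bulk of Phase~2 are sound (the conditional independence of the update indicators given the time-$t$ configuration is correct, and the hypotheses do give $\sqrt{\log n/d}\ll\delta$; I would only suggest tracking $\min_v\hat p_v(t)$ directly rather than the global $p_t$ plus a smoothness condition, which sidesteps the degree-weighting issue you flag for irregular graphs).

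The genuine gap is the endgame, which is exactly where the paper's technical work lives. Your concentration floor is $\max_v\hat q_v(t)=\Theta(\log n/d)$, so the per-vertex blue probability at the next step is $\Theta(\log^2 n/d^2)$, and a union bound gives $n\log^2 n/d^2$, which is $o(1)$ only when $\alpha>1/2$; the theorem allows $\alpha$ as small as $c/\log\log n$. Your proposed fix --- iterating $\E[N_{t+1}\mid\mathcal F_t]\le 3\sum_v\hat q_v(t)^2$ --- buys roughly a factor $1/d$ per round, so you need $\Theta(1/\alpha)=\Theta(\log\log n)$ further rounds, not $O(1)$; that still fits the time budget, but the iteration itself is not routine: $\sum_v\hat q_v(t+1)^2$ picks up a term $\sum_w\mu_w\sum_{v\sim w}d_v^{-2}\le\sum_w\mu_w d_w/d^2$, so a few blue vertices of high degree can fail to contract this potential in irregular graphs, and you would need to control a degree-weighted blue count alongside $N_t$. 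The natural alternative --- arguing $\rho_{j+1}\le 3\rho_j^2$ for $\rho_j=\max_v\Prob(v\text{ blue at }t+j\mid\mathcal F_t)$ --- requires the blueness of the two sampled neighbours to be independent, which is false and is precisely the dependency the paper's voting-DAG, Sprinkling, and collision-level machinery is built to circumvent. As written, your proposal does not contain an idea that resolves this final stage.
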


Compared to previous work, \cite{Abdullah2015Global} is the closest to ours, as they also look at forward conditions to ensure double logarithmic consensus time towards majority and they work on non-complete graphs. In order to reach double logarithmic speed the graph requires a tractable local structure around each vertex, and we need to be able to keep track of the configuration of opinions around each vertex at each time step. In this regard, the techniques used in 
\cite{Cooper2014Power} and \cite{Cooper2015Fast} are not necessarily useful to tackle the problem in our case, even though they work on a large class of graphs. This is because in their work, the authors track the number of red (and blue) opinions instead of the actual configuration of the opinions of the vertices. Although tracking the number of red opinions is easier, the obtained result is not precise enough, and indeed, the technique gives $O(\log n)$ steps towards consensus, which is not fast enough as desired. Moreover, by tracking only the number of red vertices, we lose the extra information of how the opinions of vertices are distributed given by the fact that vertices start with randomised opinions. Additionally, the proof technique used in \cite{Cooper2015Fast} works under adversarial setting where the adversarial can reorganise the opinions among the vertices and keep the total number of each opinion fixed, thus the initial location of the opinion does not matter.

With respect to \cite{Abdullah2015Global}, our result is weaker in some respects and much stronger in others. First of all, both proofs are based on a sort of `time-reversal duality' while instead of tracking the opinion of a vertex $v$ from time $0$ to fixed value $T$\,, we obtain the opinion at time $T$ by looking at the opinions at time $T-1$, and to determine those we look at the opinions at time $T-2$ etc. The process of keeping track of the opinion of a vertex is more complex, since it depends on several random variables which are dependent and thus difficult to analyse. To avoid dealing with such problem directly, \cite{Abdullah2015Global} decided to work in the setting of $k \geq 5$\,, which allows them to assume that certain vertices have the `bad' opinion (i.e. minority) even if they actually have the `good' opinion (majority). This helps them to reduce the dependency caused by the opinion updating process so as to transform the real process into a simpler and easier-to-analyse process. As $k \geq 5$, assuming that one opinion is `bad' does not particularly damage the speed of convergence to consensus, as we can hope that the other $k-1$ opinions have the `good' majority. However, since for some vertex one `bad' opinion is assumed, they rely on the other $k-1$ opinions getting the good majority quite often in the process. In other words, they need to ensure a large initial gap between the initial numbers of the two opinions, thus their result holds only when the initial probability of being blue is way less than $1/2$ (i.e. $1/2-q$ for large enough $q \in (0,1/2)$). Due to this reason, their result cannot be extended to $k=3$\,, as assuming a `bad' opinion will affect the majority significantly. Indeed, if one of the other two opinions is the `bad' opinion, then the vertex will adopt it. Our proof partially overcomes this problem. We work with $k=3$ and allow the initial probability of being blue to be $1/2-\delta$ , where $\delta$ is arbitrarily close to $0$ and  we can even choose it tending to $0$ as the graph grows. Finally, our analysis works on the family of graphs with minimum degree $n^{\Omega(1/\log \log n)}$,  whereas in \cite{Abdullah2015Global} the authors consider random graphs of a given degree sequence with average degree $o(\log n)$ among other constrains. Note that both classes of graphs are disjoint.

\section{Model and Proof Strategy}

Let us recall our model and introduce some notations. Let $G = (V,E)$ be a graph where each vertex is blue ($\textbf{B}$) independently with probability $1/2-\delta$, otherwise red ($\textbf{R}$) \,.

Define the opinion set of each vertex at time $t$ to be $\xi_t = \left( \xi_t(1), \cdots, \xi_t(n) \right) \,.$ The evolution of the opinions $(\xi_t)_{t \geq 0}$ is as follows: For $v \in V$, define $\xi_0(v)$ as the initial opinion of $v$. For each $t \geq 0$, every vertex $v$ independently samples three random neighbours $w_v^1, w_v^2, w_v^3$ (with replacement), and sets $\xi_{t+1}(v) = \text{majority}\{\xi_t(w_v^1), \xi_t(w_v^2), \xi_t(w_v^3)\}$\,. Note that the value of $(\xi_{t+1}(v))_{v \in V}$ is determined only by $(\xi_{t}(v))_{v \in V}$ plus some independent randomness, i.e. $(\xi_t)_{t \geq 0}$ is a Markov chain. %Finally, we refer to $\xi_t(v)$ as the opinion of vertex $v$ at time $t$. 

Our proof strategy consists of verifying that $\Prob(\xi_T(v) = \textbf{B}) = o(1/n)$ holds for $T = O(\log \log n)$\,, and thus $\Prob(G\text{ is red at step $T$}) =1- \Prob(\bigcup_{v \in V}\{\xi_T(v)=\textbf{B}\}) = 1-o(1)$. Therefore, the emphasis of our work is essentially in computing $\Prob(\xi_T(v) = \textbf{B})$. From the definition of the process we know that $\xi_T(v)$ is determined by the opinions of three random neighbours of vertex $v$\,, say $w_v^1, w_v^2, w_v^3$\,, at time $T-1$, i.e. $\xi_{T-1}(w_v^i)$. Similarly, $\xi_{T-1}(w_v^i)$ is determined by the opinions at time $T-2$ of three random neighbours of $w_v^i$. We can continue recursively until the point where we query for the opinions of vertices at time $0$, whose joint distribution is known. The above recursive (random) structure can be represented as a directed acyclic graph (DAG). 

A DAG $H$ is a directed graph with no directed cycles. The in-degree of a vertex $v$ is the number of edges incoming to $v$, while the out-degree is the number of edges outgoing from $v$. A root in $V(H)$ is a vertex with in-degree 0. In this work we assume that there is only 1 root. A leaf in $V(H)$ is a vertex with out-degree $0$. Given $v \in V(H)$, we define $H[v]$ as the subgraph induced by all the vertices $w$ that can be reached from $v$, i.e. there exists a directed path from $v$ to $w$. As in this work we will consider some random DAGs, we shall denote them by $\mathcal H$ while $H$ is used to denote a fixed, deterministic DAG.

Let us construct the random voting-DAG associated with $\xi_T(v_0)$. In our work, we call it a voting-DAG to specify the DAG that has out-degree at most three. Define the set $Q_T = \{v_0\}$, and for $t \in \{0,\ldots, T-1\}$ define $Q_t \subseteq V$ as the (random) subset of all vertices queried to determine the opinions of the vertices in $Q_{t+1}$ at time $t+1$\,, e.g. $Q_{T-1}$ is the set of three random neighbours of $v_0$ required to determine $\xi_T(v_0)$, etc. We define the random voting-DAG $\mathcal H = \mathcal H_{v_0}$ by setting $V(\mathcal H) = \{(v,t) \in V(G) \times \{0,\ldots, T\}: v \in Q_t\}$, and we say that $((v,t+1), (w,t)) \in E(\mathcal H)$ if and only if $(v, t+1) \in Q_{t+1}$ and one of the three vertices sampled by $v$ to compute $\xi_{t+1}(v)$ is $w$. Given the random voting-DAG $\mathcal H$ we divide its sets of vertices into levels, where level $t\in \{0,\ldots, T\}$ contains all the vertices $(v,t) \in V(\mathcal H)$. Note that each vertex at level $t+1$ connects to exactly three vertices at level $t$ and that directed paths go from higher to lower levels. 

Given a realisation $H$ of  $\mathcal H$ with root $(v_0,T)$ we can simulate $\xi_T(v)$ as following. First, settle the opinion of vertices $(v,0) \in Q_0$ to be independently \textbf{B} with probability $1/2 -\delta$, otherwise \textbf{R}\,. Then recursively compute the opinions of vertices at level $t+1$ as the majority of the three neighbours at level $t$\,, for $t \in \{0,\cdots,T-1\}$\,. Denote by $X_{H}(v,t)$ the colour of vertex $(v,t)$ in $H$. By summing up over all possible realisations of $\mathcal H$, it is clear that the colour of $(v_0,T)$ has the same distribution as $\xi_T(v_0)$, i.e.
$$\Prob\left(\xi_T(v_0)=\textbf{B}\right) = \Prob\left(X_{\mathcal H}(v_0,T) = \textbf{B}\right)\,.$$
Note that $X_{\mathcal H}(v_0,T)$ involves two independent sources of randomness. One source generates the voting-DAG $\mathcal H$, and the other settles the colours of the leaves of $\mathcal H$ (vertices at level 0) independently. Note that $(v_0,T) \in V(H)$ for any realisation $H$ of $\mathcal H$, so the random variable $X_{\mathcal H}(v_0,T)$ is well-defined. Finally, the process of defining $X_H$ as above to colour the realisation $H$ of $\mathcal H$ is referred as the colouring process. 

Given $\mathcal H = H$ we have that the opinions of the vertices at level $0$ are i.i.d,  as they do not depend on the structure imposed by $H$\,, which is unfortunately not true for levels $t>0$\,. Recall that $H[(v,t)]$ is the subgraph induced by all the reachable vertices from vertex $v$ at level $t$\,. Indeed, it is clear that the colour $X_H(v,t)$ depends only on the colouring of the leaves of $H[(v,t)]$\,. Therefore the variables $X_H(v,t)$ and $X_H(w,t)$ with $(v,t),(w,t) \in V(H)$ are independent if and only if $V(H[(v,t)])  \cap V(H[(w,t)])  = \emptyset$. As the structure distribution of $\mathcal H$ (and so its sample $H$) strongly depends on the underlying structure of $G$, it is very unlikely to have the above independence condition for all vertices in $H$. However, let us assume for a moment that the independence condition above is satisfied for all pairs of vertices sharing the same level. In such a case, $H$ is a directed ternary tree with root $(v_0,T)$\,. Let $B_{t-1}$ be the number of blue vertices at level $t-1$ among three random samples $Q_{t-1}(v)$ of a vertex $v$ at level $t-1$\,, and $b_t$ be the probability that any vertex at level $0<t\leq T$ is blue. Observe that $B_t$ follows a Binomial distribution $Bin(3, b_t)$\,, thus the probability $b_t$ follows the recursions: $b_0 = 1/2 - \delta$\,, and  
\begin{align}
b_{t} &= \Pr(B_{t-1}\ge 2) \nonumber \\
&=b_{t-1}^3+ 3b_{t-1}^2(1-b_{t-1}) = 3b_{t-1}^2-2b_{t-1}^3 \,.\label{eqn:basicRecursionK_n}
\end{align}
Therefore, a simple computation shows that by choosing $T = O(\log \log n+\log \delta^{-1})$ we get $b_T = o(n^{-1}).$

As the probability that $\mathcal H$ is realised as a ternary tree is low, the above recursion does not necessarily reflect the true process. In order to deal with the inner dependency between levels, we divide the graph $\mathcal H$ into two subgraphs, one from level $T$ to $T'$ and another from level $T'$ to $0$, where $T'$ is going to be fixed later. For the subgraph from level $T'$ to $0$ we couple the colouring $X_H$ with another colouring $X'_H$ such that if colour \textbf{B} represents $1$ and colour \textbf{R} represents 0, then $X_H(v,t) \leq X'_H(v,t)$  for all $(v,t) \in V(H)$. The process where $X'_H$ arises is called the Sprinkling process. By introducing an error term to deal with the dependency in this process, we have an easier way to study $X'_H$ as opinions among vertices at the same level are independent given $H$\,. Moreover, if we do not reveal $H$ in advance (i.e. randomize over $H$), the distribution of the colours of vertices at level $t$ for $t \in \{0, \cdots , T'\}$ is i.i.d. and follows a recursion similar to~\eqref{eqn:basicRecursionK_n}. Unfortunately, this recursion cannot be applied any further for $t>T'$, since after $T'$ steps it reaches a fixed point where the error term becomes significant in the recursion. Nevertheless, the above strategy is good enough to prove that with probability $1-o(n^{-1})$ the number of blue vertices at level $T'$ is sufficiently small. As for the subgraph of $H$ from $T'$ to $0$\,, the only way that the root of $H$ gets colour \textbf{B} is that the structure of $H$ from level $T$ to $T'$ is particularly bad for \textbf{R}\,. We can prove that the event of $H$ having such a structure occurs with probability $o(n^{-1})$\,.

\begin{remark}
The random voting-DAG $H(v_0)$ can be viewed as the trajectory of a Coalescing and Branching random walk or, for short, COBRA walk (see \cite{Berenbrink2018tight},\cite{Cooper2017Improved},  \cite{Mitzenmacher2018Better} for recent research). A COBRA walk is a discrete process on a graph $G$ where vertices are occupied by particles. At each time-step, each particle makes $k-1$ copies of itself and they locate at the same vertex, then all the particles in the graph independently move to a random neighbour. After that, if a vertex is occupied by more than one particle they coalesce into one. The process keeps repeating forever. In our setting, $H$ represents the trajectory of $T$ steps of a COBRA walk with $k=3$ starting with one particle, located at $v_0$. Level $T-t$ of $H$ represents the set of occupied vertices at time $t$ of the COBRA walk, and the edges between level $T-t$ and $T-t-1$ represent the movements of the particles between times $t$ and $t+1$. The COBRA walk with parameter $k =1$ is the classic Coalescing random walk process which is the dual process of the voter model (or best-of-1 according to our notation).
\end{remark}

The proof is presented in two parts. In Section\ref{sec:lowerLevels} we work with the lower levels of the voting-DAG (i.e. closer to the leaves) while in Section~\ref{sec:upperLevels} we study the colouring structure close to the root.

\section{Lower Levels}\label{sec:lowerLevels}

Let $G = (V,E)$ be a graph with minimum degree $d=n^{\beta/\log \log n}$ with $\beta>0$. For the simplicity of the results in this section we associate the opinion $\textbf{B}$ to the value $1$ and $\textbf{R}$ to the value $0$\,.

%
%Consider a vertex $v_0$ and a voting-DAG $H = H(v_0)$ of $T+1$ levels with $T = O(\log \log n)$. Given $H$, the colouring process described in the previous section just depends on initial colour of the vertices of $H$ at level $0$. Recall that $\{(v,0), v \in Q_0\}$ are the set of vertices at level $0$ of $H$. Let $Y$ and $Y'$ be two colouring of those vertices such that $Y \leq Y'$ components-wise (using that binary values associated to the opinions), then if $X_Y$ and $X_{Y'}$ are the colouring of $H$ with base colours $Y$ and $Y'$, then it holds that $X_Y(v,t)\leq X_{Y'}(v,t)$ for all vertices $(v,t) \in V(H)$.

Let $T'\leq T$, and consider the following protocol, which is called the Sprinkling process. Suppose we only know the structure of the voting-DAG from level $0$ up to level $T'$\,, then we choose an arbitrary order of the vertices at level $T'$, say $(v_1,T'),\ldots, (v_m,T')$ where $m = |Q_{T'}|$. For each vertex at level $T'$ from $v_1$ to $v_m$\,, we start revealing the three sampled neighbours of them at level $T'-1$ one by one. We say that a collision happens at $(v,T')$ if $(w,T'-1)$ was revealed by $(v,T')$ and it was already revealed by another vertex before $v$ in the order at level $T'$ or by $v$ itself. (See Figure~\ref{fig:eg_sprinkling} as an example.) 
\begin{figure*}[h]
    \centering
    \includegraphics[width=0.8\textwidth]{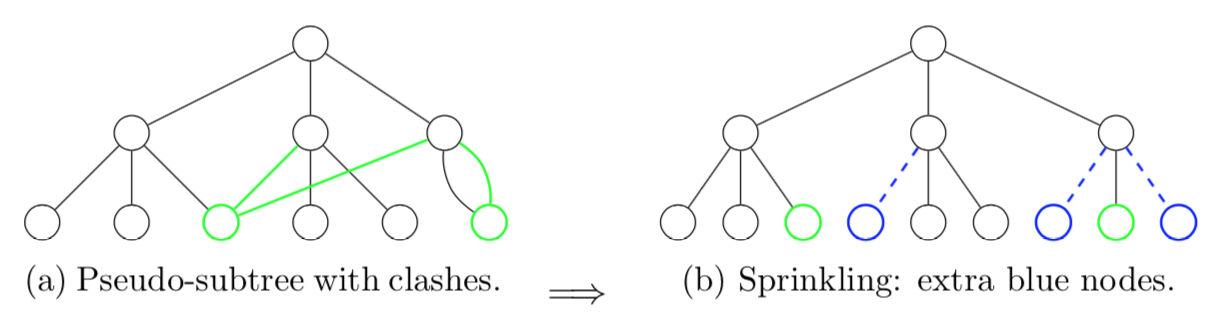}
    \caption{$H$ with 2 levels. Start the Sprinkling process in level 1. Vertices are ordered from left to right.}
	\label{fig:eg_sprinkling}
\end{figure*}
In such a case, we first erase $((v,T'),(w,T'-1))$ from the edge set of $H$\,. Then we add a new vertex to $V(H)$ at level $T'-1$, say $(q,T'-1)$, and a new edge from $(v,T')$ to $(q,T'-1)$\,. Next we set the outdegree of $(q,T'-1)$ to be $0$\,, and set the opinion of $(q,T'-1)$ to be deterministically $1$ (or \textbf{B} in colour language) irrespective of the actual colour of $(w,T'-1)$\,. By applying the Sprinkling model when revealing the neighbours of vertices at level $T'$, we will have a collision-free level, where any two vertices at this level do not have common neighbours. After this we repeat the Sprinkling process on levels $T'-1, T'-2,\ldots$ up to level $1$.  At the end we will have a new voting-DAG $H'$ with $V(H) \subseteq V(H')$, and all vertices in $V(H')\setminus V(H)$ have colour 1 (\textbf{B}) deterministically. Apart from the Sprinkling process, the rest of the colouring process of $H'$ is the same as what we do in $H$. We colour all normal vertices (which are not artificially added) at level $0$ with colour \textbf{B} with probability $1/2-\delta$, otherwise red. As those vertices without a collision also exist in $V(H)$\,, a coupling that uses the same (random) initial colours in both $H$ and $H'$ gives us $X_H(v,t) \leq X_{H'}(v,t)$ for all $(v,t) \in V(H)$, as a result of the extra blue vertices we added in the Sprinkling process. Denote by $\mathcal H'$ the result of the above process applied to the random voting-DAG $\mathcal H$.

Since the Sprinkling process gives a collision-free subgraph from level $0$ to $T'$ in $\mathcal H'$\,, it holds that for $t \in \{0,\ldots, T'\}$, $\{X_{\mathcal H'}(v,t), (v,t) \in \mathcal V(\mathcal H)\}$ are independent random variables. In spite of the fact that their distribution is not identical (and is difficult to compute because of its dependency on several factors, such as the colours at level $0$ and the (random) structure of $\mathcal H$ and thus $\mathcal H'$)\,, we will prove that
$$\Prob(X_{\mathcal H'}(v,t) = \textbf{B}| (v,t) \in V(\mathcal H)) \leq p_t,$$ where $p_t$ satisfies the recursion $p_0 = 1/2-\delta$\,, and 
\begin{align}
p_t 
&\leq (3{p_{t-1}}^2 - 2{p_{t-1}}^3) \cdot ( 1- \varepsilon_{t-1} )^3 \nonumber\\
&\quad \;+ \left( 2 p_{t-1}-{p_{t-1}}^2 \right) \cdot \begin{pmatrix} 3 \\ 1 \end{pmatrix} \varepsilon_{t-1}( 1- \varepsilon_{t-1})^2 \nonumber\\
&\quad \;+ 1\cdot \begin{pmatrix} 3 \\ 2 \end{pmatrix} {\varepsilon_{t-1}}^2 ( 1- \varepsilon_{t-1}) + 1\cdot \begin{pmatrix} 3 \\ 3 \end{pmatrix} {\varepsilon_{t-1}}^3 \,\nonumber\\
&\leq  (3{p_{t-1}}^2 - 2{p_{t-1}}^3)+ 6p_{t-1}\varepsilon_{t-1} + 3\varepsilon_{t-1}^2+\varepsilon_{t-1}^3 \;\,,
\label{eqn:boundProbSP}
\end{align}
where $\varepsilon_{t-1} = 3^{T-t+1}/d$.

The proof goes by induction. `Clearly' the bound applies for any vertex at level $0$\,. Assume it works up to level $t-1$ and consider a vertex $(v,t) \in V(\mathcal H)$ at level $t$\,. The event that $(v,t)$ is coloured by $\textbf{B}$ in $\mathcal H'$ is the same as: vertex $(v,t) \in V(\mathcal H)$ has at least two neighbours that have opinion \textbf{B} at level $t-1$\,, or there is exactly one collision in $(v,t)$ and at least one of its two normal neighbours are \textbf{B}, or there are 2 or 3 collisions in $(v,t)$\,.
Note that at level $t-1$ there are at most $3^{T-t+1}$ vertices, therefore when revealing one neighbour of $(v,t)$, the probability of a collision is at most $3^{T-t+1}/d(v)\leq 3^{T-t+1}/d= \varepsilon_{t-1}$. Then, the expression in~equation~\eqref{eqn:boundProbSP} is obtained by revealing the neighbours of the vertices at level $t$ independently of the order. The first term is the probability that no collision occurs and there are at least two blue vertices out of three normal vertices, the second term represents one collision with at least one blue vertex out of two normal vertices, and the last two terms mean two and three collisions respectively. We summarise the above argument in the following proposition.

\begin{proposition}\label{prop:T'}
Let $G = (V,E)$ be a graph of $n$ vertices. Let $v \in V$ be any vertex and consider $\mathcal H$ the random voting-DAG associated to $v$ of $T$ levels. Let $T'\leq T$, then the opinions at level $T'$ can be majorised by a set of independent opinions where the probability of being \textbf{B} is given by $p_{T'}$ as in equation~\eqref{eqn:boundProbSP}\,, where $\varepsilon_{t-1}=3^{T-t+1}/d$\,.
\end{proposition}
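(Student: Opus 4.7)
The plan is an induction on the level $t$ running from $0$ up to $T'$, maintaining two invariants for the sprinkled DAG $\mathcal H'$: (i) conditional on the realisation of $\mathcal H'$, the colours $\{X_{\mathcal H'}(v,t) : (v,t) \in V(\mathcal H)\}$ at a common level $t$ are mutually independent, and (ii) each marginal satisfies $\Pr(X_{\mathcal H'}(v,t)=\textbf{B}) \leq p_t$. Combined with the pointwise coupling $X_{\mathcal H}(v,t) \leq X_{\mathcal H'}(v,t)$ established before the Sprinkling construction, these two facts applied at $t=T'$ give exactly the majorisation asserted in the proposition.

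The base case $t=0$ is immediate: the ``original'' leaves of $\mathcal H$ are coloured i.i.d.\ \textbf{B} with probability $1/2-\delta = p_0$, and the artificial collision vertices added by Sprinkling lie in $V(\mathcal H')\setminus V(\mathcal H)$ so they do not enter the statement. For the inductive step, I first argue independence at level $t$. Because Sprinkling is iterated at every level $T', T'-1, \ldots, 1$, a short sub-induction on depth shows that for two distinct vertices $(v,t), (w,t)$ at level $t$ the subtrees $\mathcal H'[(v,t)]$ and $\mathcal H'[(w,t)]$ are vertex-disjoint: they have no shared level-$(t-1)$ children by the Sprinkling rule, and the sub-subtrees below each disjoint child are themselves disjoint by the sub-induction. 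Hence $X_{\mathcal H'}(v,t)$ and $X_{\mathcal H'}(w,t)$ depend on disjoint portions of the leaf randomness and are independent.

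For the marginal bound, I reveal the three out-neighbours of $(v,t)$ one at a time, after having processed all level-$t$ vertices preceding $v$ in the Sprinkling order. Each directed path from the root to level $t-1$ has length $T-t+1$ and the out-degree is at most $3$, so $\mathcal H'$ contains at most $3^{T-t+1}$ vertices at level $t-1$; hence the conditional probability that any given reveal produces a collision is at most $3^{T-t+1}/d(v) \leq \varepsilon_{t-1}$. The number of collisions at $(v,t)$ is therefore stochastically dominated by $\mathrm{Bin}(3,\varepsilon_{t-1})$; a collided slot is deterministically \textbf{B}, while a non-collided slot roots its own independent subtree and, by the inductive hypothesis, is \textbf{B} with probability at most $p_{t-1}$. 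Decomposing by the number of collisions and using that $(v,t)$ is \textbf{B} iff at least two of its slots are \textbf{B}, together with monotonicity of this probability in all the relevant parameters, one reads off the four terms of~\eqref{eqn:boundProbSP}; the second, simpler inequality follows by crude upper-bounding of the binomial coefficients and of $2p_{t-1}-p_{t-1}^2$ by $2p_{t-1}$.

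The main obstacle is the disjointness-of-subtrees argument underpinning the independence claim: Sprinkling at level $t$ only forbids common \emph{children} among level-$t$ vertices, and one must verify that this property propagates so that no deeper descendants are shared either. The iterated application of Sprinkling at all levels $T', \ldots, 1$ is precisely what drives this sub-induction and justifies applying the inductive hypothesis \emph{independently} to each of the three (non-colliding) children in the final case analysis. Once the subtrees are disjoint, the marginal bound is a routine enumeration over collision patterns, and the recursion in~\eqref{eqn:boundProbSP} is exactly what one obtains.
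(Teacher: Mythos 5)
Your proposal is correct and follows essentially the same route as the paper: the Sprinkling coupling $X_{\mathcal H}\leq X_{\mathcal H'}$, independence at each level from collision-freeness, and an induction in which the marginal bound is obtained by revealing the three samples one at a time, bounding each collision probability by $3^{T-t+1}/d(v)\leq\varepsilon_{t-1}$, and decomposing over the number of collisions to read off the four terms of~\eqref{eqn:boundProbSP}. Your explicit sub-induction establishing vertex-disjointness of the subtrees is a point the paper merely asserts, but it is the same underlying argument.
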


\begin{lemma}
Let $G = (V,E)$ be a graph with minimum degree $d \geq n^{\beta/\log \log n}$ for some $\beta > 0$, and assume the initial opinions are independently $\textbf{B}$ with probability $1/2-\delta$ with $\delta \geq (\log d)^{-C}$ for some $C>0$. Let $v \in V$ be an arbitrary vertex. Then for any $a>0$ there exists $T \geq \lfloor a\log \log d\rfloor$ such that if we consider the random voting-DAG $\mathcal H$ of $T$ levels, then the opinions at level $T-\lfloor a\log \log d\rfloor$ are majorised by a vector of independent opinions where opinion \textbf{B} has probability $o(d^{-1})$.
\end{lemma}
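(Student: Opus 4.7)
My plan is to apply Proposition~\ref{prop:T'} with $T' := T - \lfloor a\log\log d\rfloor$ and reduce the lemma to the purely deterministic claim that the sequence $(p_t)$ defined by the recursion in~\eqref{eqn:boundProbSP} satisfies $p_{T'} = o(d^{-1})$, for a carefully chosen total depth $T$. Since the hypothesis $\delta \ge (\log d)^{-C}$ gives $\log(\delta^{-1}) = O(\log\log d)$, I would set $T := T_0 + \lfloor a\log\log d\rfloor$ with $T_0 = C_0(\log\log d + \log\delta^{-1}) = O(\log\log d)$ for an absolute constant $C_0$ fixed in the analysis. A key consequence of this choice is that the error terms $\varepsilon_s = 3^{T-s+1}/d$ satisfy $\varepsilon_s \le 3^T/d = (\log d)^{O(1)}/d$ for every $s$, so they remain much smaller than $d^{-1+o(1)}$ throughout the iteration; the extra height $\lfloor a\log\log d\rfloor$ only inflates $\varepsilon_{T'-1}$ by the constant factor $3^{a\log\log d} = (\log d)^{O(1)}$, which is harmless.

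The core of the proof is the analysis of the recursion in three phases. In Phase~1, escape from the unstable fixed point $1/2$ of $f(x) = 3x^2 - 2x^3$: since $f'(1/2) = 3/2 > 1$, the quantity $y_t := 1/2 - p_t$ grows geometrically at rate $3/2$ while it remains smaller than some absolute constant, and the error contributions are dwarfed because $p$ is a constant. This brings $p$ below $1/4$ in $T_1 = O(\log\delta^{-1})$ steps. In Phase~2, I would iterate the quadratic map: once $p_t \le 1/4$ one has $f(p_t) \le 3p_t^2$, and as long as $p_t \gg \varepsilon_t$ the recursion yields $p_{t+1} \le (3+o(1))\,p_t^2$. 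Writing $q_t := 3p_t$ gives $q_{t+1} \le q_t^2$, so $q_{T_1+k} \le (3/4)^{2^k}$, and $O(\log\log d)$ such squarings suffice to drive $p$ below every $\varepsilon_s$.

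Phase~3 is the noise-limited tail. I would show by induction that once $p_{s-1} \le \varepsilon_{s-1}$, the recursion gives $p_s \le 10\,\varepsilon_{s-1}^2$, which is in turn bounded by $\varepsilon_s = \varepsilon_{s-1}/3$ because $\varepsilon_{s-1}$ is already $o(1)$; the invariant $p_s \le \varepsilon_s$ thus propagates up to level $T'$, and there $p_{T'} \le O(\varepsilon_{T'-1}^2) = O((\log d)^{O(1)}/d^2) = o(d^{-1})$, as required. The most delicate point is the hand-off between Phases~2 and~3: one must verify that the cross term $6 p_{t-1}\varepsilon_{t-1}$ never overtakes the squaring term $3p_{t-1}^2$ before $p$ actually reaches the $\varepsilon$-regime. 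This is controlled by the fact that $\varepsilon_s$ decays geometrically in $s$ at rate $1/3$ while $p_s$ is squaring, so the ratio $p_s/\varepsilon_s$ plummets double-exponentially throughout Phase~2 and ensures a clean transition.
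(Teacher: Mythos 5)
Your three-phase analysis of the recursion from Proposition~\ref{prop:T'} --- escaping the unstable fixed point $1/2$ in $O(\log \delta^{-1})$ steps, squaring down doubly exponentially in $O(\log\log d)$ steps, and then riding the noise floor $\varepsilon_t$ to obtain $p_{T'} = O(\varepsilon^2) = o(d^{-1})$ --- is essentially the paper's own proof, which runs the same three steps (presented in reverse order, with the threshold $1/2 - 1/(2\sqrt{3})$ and growth rate $5/4$ in place of your $1/4$ and $3/2$, and with a single extra level after an adaptively chosen stopping time in place of your Phase-3 invariant). The proposal is correct and these differences are cosmetic.
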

\begin{proof}

The proof consists of considering a voting-DAG of heigh $T = \lfloor a\log \log d\rfloor+1+T_2+T_3$, where $T_2$ and $T_3$ are chosen later. Remember that at level $0$ all vertices have independent opinions with $p_B = 1/2-\delta$. Our proof consists of three steps: i) opinions at level $T_3$ can be majorised by i.i.d. opinions with  $p_B = 1/2-{1}/{2\sqrt 3}$, ii) opinions at level $T_2+T_3$ can be majorised by i.i.d. opinions with $p_B = \polylog (d)/d$, and iii) opinions at level $1+T_2+T_3$ can be majorised  by i.i.d. opinions with $p_B = o(d^{-1})$. %Note that $1+T_2+T_3 = T-\lfloor a\log \log d\rfloor$.

We first check iii) assuming i) and ii). For that, we ignore all previous levels and consider a voting-DAG of height $h_1 = \lfloor a\log \log d\rfloor+1$ and the colour of the leaves are independently $\mathbf{B}$ with probability $p_0 = \polylog(d)/d$.  From equation~\eqref{eqn:boundProbSP} we have
$$p_1 \leq 3p_0^2+6p_0\varepsilon_0+3\varepsilon_0^2
+\varepsilon_0^3$$
and $\varepsilon_0 = 3^{h_1}/d = \polylog (d)/d$, hence $p_1 = \left({\polylog(d)}/{d}\right)^2 = o(1/d)$. 

Next, we check ii) assuming i). We consider a voting-DAG of height $h_2 = h_1+T_2$, and assume that leaves are independently $\mathbf{B}$ with probability $p_0=1/2-1/(2\sqrt{3})$.  We choose $T_2$ to be \[T_2 = \min\{\min\{t \geq 0: p_t \leq 12\varepsilon_t\},2\log_2 \log d\}\,.\] For $t \leq h_2$ we have that $\varepsilon_t \leq 3^{h_2}/d = 3^{O(\log \log d)}/d = \polylog(d)/d$. Then for $t \in \{1,\ldots, T_2\}$, we have that
\begin{align}
p_t \leq 3p_{t-1}^2+6p_{t-1}\varepsilon_{t-1}+4\varepsilon_{t-1}^2 \leq 4p_{t-1}^2.\label{eqn:random1823hsd}
\end{align}
The first inequality of \eqref{eqn:random1823hsd} is due to equation~\eqref{eqn:boundProbSP}, while the second follows from the fact that for $t \leq T_2$ we have $p_{t-1} > 12\varepsilon_{t-1}$. Iterating the recursion, we have that
$$p_t \leq (4p_0)^{2^t} \leq \left(4\left(\frac{1}2-\frac{1}{2\sqrt{3}} \right)\right)^{2^t} \leq (0.85)^{2^t}.$$

Let $L= \log_2(-\log (d)/\log (0.85))$, and note that $L \leq 2\log_2 \log d$ for large enough $d$. If $L< T_2$ then we would have that $p_L \leq 1/d$ but $\varepsilon_L \geq 3^{h_2-T_2}/d = 3^{h_1}/d = \polylog(d)/d$, which contradicts the fact that $p_L >12\varepsilon_L$. Therefore, we conclude that $T_2 = \min\{t \geq 0: p_t \leq 12\varepsilon_t\} =O(\log \log d)$ and that opinions at level $T_2$ can be majorised by independent opinions with the probability of being blue equal to $12\varepsilon_{T_2} = 12 \times 3^{h_1}/d =  \polylog(d)/d$.

Finally, we check i). We consider a voting-DAG of height $h_3 = h_2+T_3$, and assume the initial opinions are i.i.d  with probability of being blue $p_0 = 1/2-\delta$. We choose $T_3 = \min\{\min\{t \geq 0: \delta_t \geq 1/(2\sqrt{3})\}, C\log \delta^{-1}\}$, where $C$ is a suitable constant greater than $10/\log(5/4)$. Let $\delta_t = 1/2-p_t$. Then, replacing $p_t = 1/2-\delta_t$ in equation~\eqref{eqn:boundProbSP} and noting that $3\varepsilon_{t-1}^2 + \varepsilon_{t-1}^3 \le \varepsilon_{t-1}$ give us
\begin{align}
\delta_t \geq \delta_{t-1}+\left(\frac{1}2 \delta_{t-1}-2\delta_{t-1}^3-4\varepsilon_{t-1}\right) \;\,.
\label{eqn:deltatweqwr}
\end{align}
The function $f(x) = x/2-2x^3$ is such that $f(0) = 0$, and it is increasing from $0$ to $1/(2\sqrt{3})$ where it reaches a local maximum. Note that if $\delta_{t-1} \ge 12 \varepsilon_{t-1}$ and $\delta_{t-1} <1/(2\sqrt{3})$, then equation~\eqref{eqn:deltatweqwr} yields
\begin{align}
&\frac 12 \delta_{t-1} -2\delta_{t-1}^3-4 \varepsilon_{t-1} \nonumber\\
\geq &\delta_{t-1}\left(\frac 12-2\delta_{t-1}^2- \frac{\varepsilon_{t-1}}{\delta_{t-1}} \right) \nonumber\\
\geq &\delta_{t-1}\left(\frac 12-\frac16- \frac{1}{12} \right)  = \frac{\delta_{t-1}}{4}\;\,,\label{eqn:random1231jsf4}
\end{align}
implying that $\delta_t \geq 5\delta_{t-1}/4$. Note that as long as we can apply the previous recursion we have an increasing sequence of $\delta_t$ while $\varepsilon_t$ is decreasing, therefore if $\delta_0\gg \varepsilon_0$ then $\delta_t \gg \varepsilon_t$ for all $t \leq T_3$. To check that $\delta_0 \gg \varepsilon_0$, recall that $\delta_0 =\delta \geq (\log d)^{-C}$ for some constant $C>0$, and  $\varepsilon_0 \leq  3^{h_3}/d = 3^{O(\log \log d+\log \delta^{-1})}/d= (\polylog(d))/d$, then $\varepsilon_0 \ll \delta_0 $. We conclude that $\delta_t>5\delta_{t-1}/4$ for all $t\leq T_3$.
Let $L = \frac{\log(\delta^{-1}/(2\sqrt{3}))}{\log (5/4)}$. Note that 
$\delta_{L} \geq (5/4)^{L}\delta_0 \geq 1/(2\sqrt{3})$, implying that $L\geq \min\{t\geq 0: \delta_t \geq 1/(2\sqrt{3})\}$, therefore by our choice of $C$ in $T_3$ we conclude that $T_3 = \min\{t\geq0: \delta_t \geq 1/(2\sqrt{3})\} = O(\log(\delta^{-1}))$.
\end{proof}

\section{Upper Levels}\label{sec:upperLevels}

From the results in the previous section, we know that the opinions of vertices at level $T'$ (see Proposition~\ref{prop:T'}) in $H$ are majorised by i.i.d. Bernoulli random variables with probability of being $1$ (or colour \textbf{B}) equals $o(d^{-1})$. In this section, we will deal with the levels above $T'$\,.

Now, since there is no need to care about lower levels, we assume that $\mathcal H$ is a voting-DAG of $h+1 = T-T'$ levels with root $(v_0,h)$ and that the vertices at levels $0$ are independently \textbf{B} with probability $o(d^{-1})$, otherwise \textbf{R}. Our strategy to deal with this case is to show that for most realisations of the random voting-DAG $\mathcal H$, the number of vertices at bottom with colour \textbf{B} is too small for the root of $\mathcal H$ to have colour \textbf{B}. We start by supposing that $\mathcal H$ is (deterministically) a ternary tree. 

\begin{lemma}\label{lemma:leastB}
Suppose that $H$ is a ternary tree of $h+1$ levels. Then, if the number of leaves with opinion \textbf{B} is less than $2^h$, then the root has opinion \textbf{R}.
\end{lemma}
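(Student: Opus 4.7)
The plan is to prove the contrapositive by induction on $h$: if the root of a ternary tree of $h+1$ levels has opinion $\textbf{B}$, then the tree has at least $2^h$ blue leaves. The lemma then follows immediately.

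The base case $h=0$ is trivial, since the tree consists of a single vertex that is simultaneously root and leaf, so a blue root contributes exactly $1 = 2^0$ blue leaves. For the inductive step, I would consider a ternary tree of $h+1$ levels whose root is blue. The root has three children, each being the root of a ternary subtree of $h$ levels. By the majority rule used in the colouring process, a blue root forces at least two of its three children to be blue. Applying the induction hypothesis to each of these (at least two) blue subtrees yields at least $2^{h-1}$ blue leaves per subtree, which combine to give at least $2 \cdot 2^{h-1} = 2^h$ blue leaves in total. Since the three subtrees are disjoint in a tree (unlike in a general DAG), the leaf counts add without overlap, completing the induction.

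I do not expect any real obstacle here; the statement is essentially a clean combinatorial observation about majority propagation on a ternary tree, and the bound $2^h$ is tight (achieved by picking any two children of every blue internal vertex to be blue). The only thing worth flagging is that the argument uses disjointness of subtrees in a crucial way, which is why the lemma is stated for a tree rather than a general voting-DAG; extensions to DAGs will later require controlling the overlap between the subgraphs $H[(v,t)]$ generated by the upper levels.
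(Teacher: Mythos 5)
Your proof is correct and follows essentially the same route as the paper: the paper also argues the contrapositive, noting that a blue root forces at least two blue children, each the root of a disjoint ternary subtree, and concludes by (implicit) induction. Your write-up merely makes the induction and the disjointness of the subtrees explicit.
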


\begin{proof}
The statement is equivalent to that if the root is \textbf{B} then there are at least $2^h$ vertices with opinion \textbf{B} at level 0. The result holds easily by noting at least two neighbours of the root have opinion \textbf{B}, and that they are also the root of a (sub)-ternary tree of $H$.
\end{proof}

For the case that the voting-DAG is not a ternary tree, the next lemma establishes that we can find a colouring on a ternary tree that gives the same colour to the root, and the number of \textbf{B} leaves in the ternary tree depends on the number of levels that involve collisions in the DAG.

\begin{lemma}\label{lemma:3-tree}
Let $H$ be a fixed voting-DAG of $h+1$ levels with root $v_0 \in V(G)$. Given a colouring $\xi$ of the vertices at level $0$, there exists a colouring $\xi'$ of the leaves of a ternary tree $H'$ of $h+1$ levels such that the colouring process in $H$ and in $H'$ give the same colour to the root. Moreover, the number of \textbf{B} leaves in $\xi'$ is at most $B_0 \cdot 2^C$ where $B_0$ is the number of \textbf{B} leaves in $\xi$ and $C$ is the number of levels of $H$ that involve at least one collision.
\end{lemma}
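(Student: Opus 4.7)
The plan is to proceed by induction on the number of collision levels $C$, removing one collision level at a time at a multiplicative cost of at most $2$ in the B-leaf count. The base case $C=0$ is immediate: a voting-DAG with no collisions at any level has every vertex with three distinct out-neighbours disjoint from those of its same-level siblings, hence $H$ is already a ternary tree of $h+1$ levels. Taking $H'=H$ and $\xi'=\xi$ trivially preserves the root colour and gives exactly $B_0 = B_0\cdot 2^0$ B leaves.

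For the inductive step, I would pick a collision level $t^\ast\in\{1,\ldots,h\}$ and build a reduced DAG $H^\ast$ by resolving the collisions at that level. Concretely, for each vertex $w$ at level $t^\ast-1$ that receives more than one sample-edge from level $t^\ast$ --- whether from a self-colliding parent or from several distinct parents --- I would duplicate $w$ together with its entire downward sub-DAG, creating one private copy per incoming sample-slot and rerouting each such edge to its own copy. In $H^\ast$ the level-$t^\ast$ vertices now own disjoint sets of children, so level $t^\ast$ is collision-free, and since duplicates are placed privately no new collisions appear below, giving $C-1$ collision levels in total. Each copy computes the same colour as its original in $H$ (its sub-DAG is isomorphic), so the root colour is preserved. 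A careful accounting of which leaves are duplicated yields $B_0^\ast \le 2B_0$, and applying the inductive hypothesis to $H^\ast$ produces a ternary tree $H'$ with a colouring $\xi'$ matching the root colour of $H$ and at most $B_0^\ast\cdot 2^{C-1}\le B_0\cdot 2^C$ B leaves.

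The main obstacle, and the delicate part of the proof, is justifying $B_0^\ast\le 2B_0$ in the duplication step. A priori, if a shared vertex $w$ at level $t^\ast-1$ has in-degree $k\in\{2,3\}$ (counting self-collision multiplicities), naive duplication replicates the sub-DAG below $w$ up to $k$ times, so a single B leaf can be replicated into three copies rather than two. Getting a factor of $2$ per level seems to require a more global charging argument --- for instance, pairing each duplicated copy with a collision at an adjacent level so that the ``third'' copy is absorbed into the inductive $2^{C-1}$ factor, or choosing the duplication order carefully so that each collision contributes exactly one doubling event. An alternative route I would try is top-down induction on $h$, decomposing $H$ at the root into the sub-DAGs $H_1, H_2, H_3$ rooted at the three root-children, applying the hypothesis inside each, and joining the resulting ternary trees under a new root; this would require a refined invariant to cope with collisions spanning distinct sub-DAGs (which contribute to $C$ but to no individual $C_i$) and with a self-collision of the root itself, which is precisely where a naive additive decomposition breaks down.
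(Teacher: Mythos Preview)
Your proposal correctly isolates the crux --- obtaining a factor of $2$ rather than $3$ per collision level --- but does not supply the idea that achieves it, so the argument is incomplete. The global-duplication step genuinely fails in the way you anticipate: if a vertex $w$ at level $t^\ast-1$ has in-degree $3$ (for instance, one parent sends all three sample-edges to $w$), making one private copy of $H[w]$ per incoming slot triples every \textbf{B} leaf below $w$, and no ordering of the duplication repairs this, since the multiplicity is governed by in-degrees rather than by the number of collision levels.

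The paper's route is different: it inducts on the height $h$, not on $C$, and works only at the root. The observation you are missing is that if at least two of the root's three out-edges land on the same child $(v,h-1)$, then the root's colour is \emph{exactly} the colour of $(v,h-1)$ --- the third edge is irrelevant to the majority. One may therefore take $H'$ to be the root attached to two disjoint copies of $H[(v,h-1)]$ together with one all-\textbf{R} ternary tree of the appropriate height. Applying the inductive hypothesis to $H[(v,h-1)]$, which has at most $B_0$ blue leaves and at most $C-1$ collision levels (level $h$ is a collision level of $H$ but is absent from $H[(v,h-1)]$), yields a ternary tree with at most $B_0\cdot 2^{C-1}$ blue leaves; two copies give $B_0\cdot 2^{C}$, and the all-\textbf{R} subtree contributes nothing. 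So the factor $2$ arises not from bounding a duplication count, but from the fact that only two of the three subtrees need to carry any blue at all. For the remaining case where the root's three children are distinct, the paper simply applies the hypothesis to each sub-DAG $H[(v_i,h-1)]$ and assembles; this is exactly the cross-subtree-sharing situation you flag at the end of your proposal, and the paper's own treatment of it is equally terse.
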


\smallskip
\begin{proof}
The proof follows by induction on the number of levels. If the number of levels is 1, then $H$ is a single vertex and the result holds trivially. Suppose the result holds for $h$ levels, we will prove it for $h+1$ levels.  Let $X_H$ be the colouring of $H$ given the colouring $\xi$ of the leaves. Consider the root at level $h$ and let $e_1$, $e_2$ and $e_3$ be its three outgoing edges. We consider two cases: i) at least two of these edges share the same endpoint at level $h-1$ (i.e. a collision at level $h$), or ii) the edges do not share endpoints at level $h-1$ (i.e. level $h$ is collision-free). In the first case i) the opinion of $(v_0,h)$ is determined by the colour of the shared endpoint, say $(v,h-1)$. In this case, we consider a voting-DAG $H'$ of $h+1$ levels. At level $h$ we have the root $(v_0,h)$\,. At level $h-1$ we put two disjoint copies of $H[(v,t)]$ (without sharing vertices), and one ternary tree of $h-1$ levels. Then we connect $(v_0,h)$ with the root of those three sub-graphs. We colour the leaves of $H'$ as follows. In the copies of $H[(v,t)]$ the colours of the vertices are given by the original opinions settling in $\xi$\,, while the leaves of the ternary tree are attached to colour \textbf{R}\,. Note that the colour of the root of $H'$ is the same as the root of $H$ since the colour of the root of $H'$ is determined by the colour of the root of $H[(v,t)]$ (the colour of the root of the ternary tree is irrelevant). By the induction hypothesis, $H[(v,t)]$ can be transformed into a tree with at most $B_0'2^{C'}$ leaves with opinion \textbf{B}, where $B_0'$ and $C'$ are the number of blue leaves and the number of levels involving at least one collision in $H[(v,t)]$, respectively. By the construction of $H'$, all the collisions are represented in the copies of $H[(v,t)]$ which are $C'$. Clearly $C'+1\leq C$ and $B_0'\leq B_0$. Applying the induction hypothesis to the two copies of $H[(v,t)]$, where $H[(v,t)]$ is transformed in a ternary tree such that $B_0'2^{C'}$ leaves have colour \textbf{B}. As $H'$ contains two copies of such graphs, after the induction step we get a ternary tree with $2^{C'}2B_0' = 2^{C'+1}B_0' \leq 2^{C}B_0$ leaves with colour $B$. Case ii) can be done similarly by applying the induction hypothesis to the three vertices in level $1$.
\end{proof}

Finally, we combine the two previous lemmas to show that with w.h.p the root is \textbf{R}. The idea is to show that the number of levels involving collisions in the DAG is not large and therefore a straightforward application of Lemma \ref{lemma:3-tree} and Lemma~\ref{lemma:leastB}, together with the fact that a leaf is \textbf{B} with probability $o(d^{-1})$, tell us the the root of the DAG is \textbf{R} w.h.p.

\begin{lemma}
Consider a random voting-DAG $\mathcal H$ with $h+1$ levels, whose leaves have opinion \textbf{B} with probabiluty $o(d^{-1})$, otherwise $\textbf{R}$. Then, with probability $o(n^{-1})$ the root of $\mathcal H$ is \textbf{B}.
\end{lemma}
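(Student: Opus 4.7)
My plan is to combine Lemma~\ref{lemma:3-tree} with Lemma~\ref{lemma:leastB} to reduce the claim to two tail estimates. Let $N$ denote the number of leaves of $\mathcal{H}$ coloured \textbf{B}, and let $C$ denote the number of levels of $\mathcal{H}$ that contain at least one collision. By these two lemmas, if the root of $\mathcal{H}$ is \textbf{B} then $N \cdot 2^C \geq 2^h$, so for any threshold $c^\star$,
\[
\Pr(\text{root is }\textbf{B}) \;\leq\; \Pr(C > c^\star) \;+\; \Pr\bigl(N \geq 2^{h-c^\star}\bigr).
\]
I would take $c^\star \approx h/2$, with $h = \Theta(\log\log d)$ as dictated by Section~\ref{sec:lowerLevels}, and bound the two terms separately.

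For the collision tail, there are at most $3^{h-t+1}$ edges revealed between levels $t$ and $t-1$, and each such reveal lands on an already-visited vertex at level $t-1$ with probability at most $3^{h-t+1}/d$. A union bound over the reveals shows that the probability of at least one collision at level $t$ is at most $9^{h-t+1}/d$, so $\E[C] \leq 9^{h+1}/(8d) = \polylog(d)/d$. A further union bound over $c^\star$-subsets of levels (using the product of per-level probabilities) gives $\Pr(C \geq c^\star) \leq (1/c^\star!)\bigl(9^{h+1}/(8d)\bigr)^{c^\star}$. With $d \geq n^{\beta/\log\log n}$ and $h = a\log\log d$, choosing the constant $a$ large enough in terms of $\beta$ makes this $o(n^{-1})$.

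For the leaf tail, conditionally on $\mathcal{H}$ the count $N$ is distributed as $\mathrm{Bin}(|Q_0|, p_0)$ with $|Q_0| \leq 3^h$ and $p_0 = o(d^{-1})$, since the leaf opinions are generated independently of the voting-DAG structure. The standard bound $\Pr(\mathrm{Bin}(m,p) \geq k) \leq (emp/k)^k$ then yields
\[
\Pr\bigl(N \geq 2^{h/2}\bigr) \;\leq\; \bigl(e \cdot o\bigl((3/\sqrt{2})^{h}/d\bigr)\bigr)^{2^{h/2}}.
\]
Since the base is $o(\polylog(d)/d) = o(1)$ and the exponent $2^{h/2}$ is of order $(\log d)^{\Theta(1)}$, the resulting bound is of the form $n^{-(\log d)^{\Theta(1)}}$, which beats $n^{-1}$ comfortably.

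The main obstacle is calibrating the two tails at once: the $C$-bound prefers $c^\star$ small while the $N$-bound prefers $c^\star$ large. Because the leaf tail decays doubly-exponentially in $h$ whereas $\Pr(C > c^\star)$ decays only polynomially in $n$, the binding constraint is the collision tail, and it can be met by taking the constant $a$ in $h = a\log\log d$ sufficiently large (consistently with the choice already made in Section~\ref{sec:lowerLevels}). A secondary subtlety is that collisions across different levels are not independent, but the subset union bound absorbs this without needing any negative-correlation argument; the only structural information actually used is the deterministic upper bound $|Q_t| \leq 3^{h-t}$ on the number of vertices at each level.
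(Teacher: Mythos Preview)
Your proposal is correct and follows essentially the same route as the paper: reduce via Lemmas~\ref{lemma:3-tree} and~\ref{lemma:leastB} to the event $\{N\cdot 2^C\ge 2^h\}$, split at the threshold $c^\star=h/2$, and bound each tail by a binomial-type estimate using only the deterministic level-size bound $|Q_t|\le 3^{h-t}$. The only cosmetic differences are that the paper uses the uniform per-level collision bound $9^h/d$ (and hence stochastic domination by $\mathrm{Bin}(h,9^h/d)$) instead of your level-dependent sum, and it uses the cruder exponent $h/2$ rather than your $2^{h/2}$ in the leaf-tail bound; both choices lead to the same final estimate $(2e\cdot 9^h/d)^{h/2}=o(n^{-1})$.
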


\begin{proof}
Let $C_i$ be the indicator random variable taking value $1$ if at least one  clash occurs at level $i$\,. Recall that level $i$ involves a collision if two vertices at level $i$ share a neighbour at level $i-1$\,. Consider the event $E_i = \{\text{there are $m_i$ vertices at level $i$}\}$\,, where $m_i \leq 3^{h-i}$\,, and start revealing the neighbours of the vertices at level $i$ one by one. Then,
\begin{align*}
&\Pr(C_i = 1|E_i) = 1- \Pr (C_i = 0|E_i ) \\
\le &1- \left[ (1-1/d)(1- 2/d) \cdots (1- (m_i-1)/d) \right] \\
\le &1- (1- m_i/d)^{m_i} \le 1- ( 1- m_i \cdot m_i/d) \\
= &{m_i}^2 /d \leq 9^{h}/d\,.
\end{align*}
Denote by $C = \sum_{i=1}^{h} C_i$ the total number of levels that involve at least one collision. Then, as $C_i$ only depends on the out-edges of vertices at level $i$ of $H$\,, we have that $C$ can be majorised by a $Bin(h, 9^h /d)$ random variable.

We first construct a ternary tree $\mathcal H'$ by applying Lemma~\ref{lemma:3-tree} to $\mathcal H$. Let $B$ and $B'$ be the number of leaves with opinion \textbf{B} in $\mathcal H$ and $\mathcal H'$\,, respectively. Then $B' \leq 2^C B$, and by Lemma~\ref{lemma:leastB} it holds that
\begin{align*}
\Prob(\text{root of $\mathcal H$ is blue}) &\leq \Prob(B' \geq 2^h)\\
&\leq \Prob(2^C \cdot B \geq 2^h) \\
&= \Prob(B \geq 2^{h-C}) \;\,,
\end{align*}
and
%Consider $h = a\log_2 \log_2 d$ with $a = 1/(4\log_2 3)$. 
\begin{align}
\Prob(B \geq 2^{h-C}) &\leq \Prob(B \geq 2^{h-C} |C \geq \frac{h}{2})\Prob(C\geq \frac{h}{2}) \nonumber\\
&\quad\;+ \Prob(B \geq 2^{h-C} |C < \frac{h}{2})\Prob(C<\frac{h}{2})\nonumber\\
&\leq \Prob(C \geq \frac{h}{2})+\Prob(B \geq 2^{h/2})\label{eqn:random20838113} \;\,.
\end{align}
For the first probability of inequality~\eqref{eqn:random20838113}\,, we get
\begin{align}
\Pr(C > \frac{h}{2}) &\le \sum_{k= \lfloor{h/2}\rfloor}^{h} \binom{h}{k} \left(\frac{{9^h}}{d} \right)^k \nonumber \\
&\le \sum_{k= \lfloor{h/2}\rfloor}^{h} \left(\frac{he}{k}\right)^k \left(\frac{{9^h}}{d} \right)^k \nonumber\\
&\le \sum_{k= \lfloor{h/2}\rfloor}^{h} \left(\frac{2e\,9^h}{d} \right)^k \nonumber \\
&\le \left( \frac{2e \,{9^h}}{d} \right)^{\lceil{h/2}\rceil} \sum_{k=0}^{\infty} \left( \frac{2e \,9^h}{d} \right)^k\nonumber\\
&\leq \left(\frac{2e\,9^h}{d}\right)^{h/2} \label{eqn:random39427df} \;\,.
\end{align}
The last step holds as we claim that $2e\,9^h/d\leq 1/2$. To see this, let $h = a\log \log_2(d)$ for some constant $a>0$, then for any $b<1$\,, 
\begin{align*}
2e\,9^h/d &= 2e\exp\{a\log 9 \log (\log_2 d)\}/d \\
&= 2e (\log_2 d)^{a\log 9}/d \leq d^{-b} \;\,.
\end{align*}
This proves the claim. From the previous equation, using $d = n^{\alpha}$ we obtain
\begin{eqnarray}
\Prob(C>\frac{h}{2})\leq d^{-bh/2} = n^{-bh\alpha/2} \;\,.
\end{eqnarray}
We finish by checking that $bh\alpha/2 > 1$\,, in which $h = a\log \alpha\log_2(n)$\,. Then we get
\begin{align}
\frac{bh\alpha}{2} &= \frac{\alpha b a}{2}\log (\alpha\log_2(n)) \nonumber\\
&= \frac{ab}{2}\left(\alpha\log \alpha + \alpha \log \log_2 n \right) \;\,.
\end{align}
For $\alpha > c/\log \log_2 n$\,, we can choose $a$ large enough such that the above quantity is greater than 1, so that $\Prob(C>h/2) = o(n^{-1})$\,. 

\smallskip
For the other term of the inequality~\eqref{eqn:random20838113}\,,
\begin{align*}
\Pr \left(B \ge 2^{h/2}\right) &\le \sum_{k= \lfloor{h/2}\rfloor}^{3^h} \binom{3^h}{k} \left(\frac{3^h}{d} \right)^k \\
&\le \sum_{k= \lfloor{h/2}\rfloor}^{3^h} \left( \frac{3^h\, e}{\lceil{h/2}\rceil } \right)^k \left(\frac{3^h}{d} \right)^k \\
&\le \left( \frac{2e \, {9^h}}{d \, h} \right)^{\lfloor{h/2}\rfloor} \; \sum_{k=0}^{\infty} \left( \frac{2e \, 9^h}{d\,h} \right)^k\nonumber \\
&\leq \left(\frac{2e9^h}{d}\right)^{h/2} \;\,.
\end{align*}
	
\smallskip
The last step holds as long as $2e 9^h/(d h)= o(1)$\,, which was showed before. Note that we already demonstrated that $\left(2e \,9^h/d\right)^{h/2}  = o(n^{-1})$, then we conclude that $\Prob(B > 2^{h/2}) = o(n^{-1})$.
\end{proof}

\bibliography{MyBibFile}
\bibliographystyle{plain}
\end{document}